\newtheorem{theorem}{Theorem}
\begin{document}
\author{\IEEEauthorblockN{Rishabh Roy, Rajshekhar V Bhat, Preyas Hathi, Nadeem Akhtar  and Naveen Mysore Balasubramanya}}
\title{Maximization of Timely Throughput with Target Wake Time in IEEE 802.11ax\thanks{The work of Rishabh Roy was supported by Arista Networks. Rishabh Roy, Rajshekhar V Bhat and Naveen Mysore Balasubramanya are with the Indian Institute of Technology Dharwad, Dharwad, India. Preyas Hathi is with Arista Networks and Nadeem Akhtar is with Tarana Wireless Inc.}
}

	

\maketitle

\begin{abstract}
 In the IEEE 802.11ax standard, a mode of operation  called target wake time (TWT) is introduced towards enabling deterministic scheduling in  WLAN networks. In the TWT mode, a group of stations (STAs)  can negotiate with the access point (AP) a periodically repeating time window, referred to as TWT Service Period (TWT-SP),  over which they are awake and outside which they \emph{sleep} for saving power. The offset from a common starting time to the first TWT-SP is referred to as the TWT Offset (TWT-O)  and the periodicity of TWT-SP is referred to as the TWT Wake Interval (TWT-WI). 
In this work, we consider communication between multiple STAs with heterogeneous traffic flows and an AP of an IEEE 802.11ax network operating in the TWT mode. 
Our objective is to maximize a long-term weighted average timely throughput across the STAs, where the instantaneous timely throughput is defined  as the number of packets  delivered successfully before their deadlines at a decision instant. To achieve this, we obtain algorithms, composed of (i) an inner resource allocation (RA) routine that allocates resource units (RUs) and transmit powers to STAs,
and (ii) an outer grouping routine that assigns STAs to (TWT-SP, TWT-O, TWT-WI) triplets. 
For inner RA, we propose a near-optimal low-complexity algorithm using the drift-plus-penalty (DPP) framework and we adopt a greedy algorithm as outer grouping routine. Via numerical simulations, we observe that the proposed algorithm, composed of a DPP based RA and a greedy grouping routine, performs better than other competitive algorithms.

 \end{abstract}

 \section{Introduction}

 The IEEE Standard IEEE 802.11ax-2021, also known as high-efficiency wireless (HEW) or WiFi-6, is the latest version of the standard for wireless local area networks (WLANs) \cite{Standard}. An important goal of the standard is to enable improved \emph{Quality of Experience} (QoE) of WLAN users 
 by efficient utilization of spectrum and power resources.  For instance, IEEE 802.11ax expects to improve the average throughput per user station (STA) by a factor of four and reduce latency significantly compared to those in earlier versions, especially  in dense scenarios \cite{Standard,Latency1}.
 
 To enable efficient spectrum utilization, the standard has incorporated  
the use of well-known  orthogonal frequency division multiple access (OFDMA) and multi-user multiple-input and multiple-output (MU-MIMO) for both uplink (UL) and downlink (DL) communications. A technique called spatial reuse, where   new transmissions can be carried out, subject to certain constraints on the transmit power, in presence of ongoing interfering transmissions has also been introduced for better spectrum utilization. Moreover, for efficient use of the  power resource, a mechanism called target wake time (TWT) has been introduced where a group of  STAs can negotiate \emph{wake} (respectively, \emph{sleep}) periods during which the STAs in a group are awake to send or receive data (turn off their transceivers and go to a low-power sleep mode for saving power). 

  
In this work, we consider optimization of an IEEE 802.11ax network supporting the TWT feature.  
In a TWT session, the wake and sleep cycles for STAs have to be negotiated and scheduled by the participating nodes, namely, APs and STAs. The time duration for which the STA is awake for sending and receiving data is called as 
the \emph{TWT Service Period} (TWT-SP). The time offset after which an STA wakes up following  a target beacon transmission time (TBTT) specified by a special beacon frame called as \emph{Target} beacon 
 and starts a TWT-SP is called 
as \emph{target wake time}. Both the mechanism and the time duration are referred as Target Wake Time. For avoiding any confusion, we refer to the \emph{target wake time} as the \emph{TWT Offset} (TWT-O).  The time duration between two consecutive TWT-SPs is called as the \emph{TWT wake interval} (TWT-WI). 
With careful selection of TWT parameters, in addition to the power conservation, the TWT mechanism can reduce the contention and the number of collisions significantly as the STAs in the sleep mode do not compete for the common wireless medium \cite{Bellata-TWT-Scheduled-Access,Tutorial}.
Moreover, the parameters of TWT are decided before a TWT session, due to which the TWT provides a deterministic operation and guaranteed service, unlike when the TWT mechanism is absent in which case random access mechanisms  are used.
In this work, we consider the operation of TWT in a mode called the broadcast TWT agreement mode, 
where  the AP assigns an STA a (TWT-O, TWT-WI, TWT-SP) triplet from a set of possible triplets and the STAs with the same (TWT-O, TWT-WI, TWT-SP) triplet form a broadcast TWT group. Within a group, the STAs transmit data using OFDMA, for which we  allocate resource units (RUs) consisting of a specific set of consecutive subcarriers and transmit power. 
In the literature, optimization of various design parameters of IEEE 802.11ax networks supporting the TWT feature have been considered.  For instance, \cite{broadcast-TWT-TBTT-LI-optimization} considers a broadcast TWT setup and proposes a scheme to split STAs into groups by providing each group a different time slot to wake up. 
 In \cite{DBLP:Karaca_journals/corr/abs-2006-02235}, the author considers an individual TWT set-up and jointly optimizes TWT wake interval from a predefined set of periodicities and STA scheduling  based on traffic patterns and channel state information, such that the overall energy consumption by the STAs is minimized.
TWT scheduling strategies based on well-known max-rate and proportional fairness schedulers have been considered in \cite{WCNC-9417269}. 
In \cite{Bellata-Uplink-Resource-Allocation}, the authors consider a utility maximization problem with average rate and power constraints and solve it using the Lyapunov framework.  At each uplink OFDMA transmission, STAs are assigned transmit power and a resource unit where the transmit power is allocated independently in a optimal way over each RU and RU allocation is carried out as a classical assignment problem solved via the Hungarian method. 

Although considerable work has been carried out for design and optimization of IEEE 802.11ax networks, in the existing literature, optimization of TWT parameters, resource allocation and user scheduling problems have been treated  independently of one another. In a broadcast TWT setup, joint optimization of user grouping, resource allocation at each TWT-SP and TWT parameter selection, especially in a setup where STAs present in the network have delay-bound packets in their buffer has not been explored. Towards exploring this, our  main contributions  are: 
\begin{itemize}
	\item We consider a sum timely throughput maximization problem in  an IEEE 802.11ax network supporting the TWT feature, where fixed but arbitrary number of STAs with delay-bound packets transmit to an AP. 
   \item In solving the above problem, we propose an iterative algorithm that consists of inner and outer routines. The inner routine allocates RUs and the transmit powers for a given group of STAs in order to maximize their long-term average timely throughput. The outer routine partitions the set of STAs into pre-specified number of groups, where each group is defined by its associated (TWT-O, TWT-WI, TWT-SP) triplet.   
   \item Via numerical simulations,  we show that the proposed algorithm, in which the inner RA routine is a drift-plus-penalty based method and the outer grouping routine is a greedy method, performs significantly better  than competitive  algorithms, in which the inner RA routines are either round-robin based or greedy and outer grouping routine  is round-robin based.
\end{itemize}
\section{System Model and Problem Formulation}

\begin{figure*}[t]
\centering
\begin{tikzpicture}[scale = 0.55]
\def\xzero{0};
\def\xone{28};
\def\yzero{0};
\def\yone{8};

\draw[ultra thick,-{stealth[scale = 4]}] (\xzero,\yzero) -- (\xone,\yzero); 
\tikzstyle{fontbf} = [scale=0.5,text width=5cm,text centered]
\node[fontbf] (1) at (\xone-0.2,\yzero-0.75) {time};
\draw[ultra thick,-{stealth[scale = 4]}] (\xzero,\yzero) -- (\xzero,\yone-0.5) node[anchor = west]{};
\node[fontbf, text width=2cm] (2) at (\xzero-1.3,\yzero+3.5) {Frequency (as Resource Units)};


\draw[thick,dashed] (\xzero+1.5,\yzero) -- (\xzero+1.5,\yone-1.2);
\draw[thick,dashed] (\xone-2,\yzero) -- (\xone-2,\yone-1.2);
\draw[thick, {stealth[scale = 4]}-{stealth[scale = 4]}] (\xzero+5,\yzero-1) -- (\xzero+7,\yzero-1) node[fontbf, midway,below = 2 pt]{(Block Duration)} node[fontbf,midway,above = 2 pt]{FTT};

\node[fontbf, align=center] at (\xzero+1.5,\yzero+5) {Start of\\ $\bm{i^{th}}$ TWT};
\node[fontbf, align=center] at (\xzero+1.6,\yzero+4.4) {negotiation};
\node[fontbf, align=center] at (\xone-2,\yzero+5) {Start of\\ $\bm{(i+1)^{th}}$ TWT};
\node[fontbf, align=center] at (\xone-2.1,\yzero+4.4) {negotiation};

\draw[thick] (\xzero+5,\yzero) rectangle (\xzero+7,\yone-3);  
 \draw (\xzero+5,\yzero+2) -- (\xzero+7,\yzero+2);
 \node[fontbf, align=center] at (\xzero+6,\yzero+1) {STA 3};
 \draw (\xzero+5,\yzero+3) -- (\xzero+7,\yzero+3);
 \node[fontbf, align=center] at (\xzero+6,\yzero+2.5) {STA 5};
\node[fontbf, align=center] at (\xzero+6,\yzero+4) {STA 6};

\node[fontbf,align=center] at (\xzero+6,\yzero+5.2) {Frame \#1};

\draw[thick, dashed] (\xzero+7,\yzero+2.5) -- (\xzero+9,\yzero+2.5);
\node[fontbf, align=center] at (\xzero+8,\yzero+3.5){multiple\\Frame\\Transmission};

\draw[thick] (\xzero+9,\yzero) rectangle (\xzero+11,\yone-3);
\draw (\xzero+9,\yzero+2) -- (\xzero+11,\yzero+2);
\node[fontbf, align=center] at (\xzero+10,\yzero+1) {STA 5};
 \draw (\xzero+9,\yzero+3) -- (\xzero+11,\yzero+3);
 \node[fontbf, align=center] at (\xzero+10,\yzero+2.5) {STA 6};
 \node[fontbf, align=center] at (\xzero+10,\yzero+4) {STA 3};

\node[fontbf,align=center] at (\xzero+10,\yzero+5.2) {Frame \#n};

\draw[thick, {stealth[scale = 3]}-{stealth[scale =3]}] (\xzero+5,\yzero+5.8) -- (\xzero+11,\yzero+5.8) node[fontbf, midway,below = 2 pt]{(TWT-SP)} node[fontbf,midway,above = 2 pt]{${\zeta}$ blocks};

\draw[thick, {stealth[scale = 3]}-{stealth[scale =3]}] (\xzero+1.5,\yone-1.5) -- (\xzero+5,\yone-1.5) node[fontbf, midway,below = 2 pt]{(TWT-O)} node[fontbf,midway,above = 2 pt]{{o} blocks};


\draw[thick, {stealth[scale = 3]}-{stealth[scale =3]}] (\xzero+5,\yone-1.5) -- (\xzero+16,\yone-1.5) node[fontbf, midway,below = 2 pt]{(TWT-WI)} node[fontbf,midway,above = 2 pt]{${\tau}$ blocks};

\draw[thick, dashed] (\xzero+11,\yzero+2.5) -- (\xzero+16,\yzero+2.5);
\node[fontbf, align=center] at (\xzero+13.5,\yzero+3.5){TWT-SP\\Of Other\\Broadcast Groups};

\draw[thick] (\xzero+16,\yzero) rectangle (\xzero+18,\yone-3);
\draw (\xzero+16,\yzero+2) -- (\xzero+18,\yzero+2);
\node[fontbf, align=center] at (\xzero+17,\yzero+1) {STA 6};
\draw (\xzero+16,\yzero+3) -- (\xzero+18,\yzero+3);
 \node[fontbf, align=center] at (\xzero+17,\yzero+2.5) {STA 3};
 \node[fontbf, align=center] at (\xzero+17,\yzero+4) {STA 5};

\node[fontbf,align=center] at (\xzero+17,\yzero+5.2) {Frame \#1};

\draw[thick, dashed] (\xzero+18,\yzero+2.5) -- (\xzero+20,\yzero+2.5);
\node[fontbf, align=center] at (\xzero+19,\yzero+3.5){multiple\\Frame\\Transmission};

\draw[thick] (\xzero+20,\yzero) rectangle (\xzero+22,\yone-3);
\draw (\xzero+20,\yzero+2) -- (\xzero+22,\yzero+2);
\node[fontbf, align = center] at (\xzero+21,\yzero+1) {STA 3}; 
\draw (\xzero+20,\yzero+3) -- (\xzero+22,\yzero+3);
\node[fontbf, align = center] at (\xzero+21,\yzero+2.5) {STA 6}; 
\node[fontbf, align = center] at (\xzero+21,\yzero+4) {STA 5}; 

\node[fontbf,align=center] at (\xzero+21,\yzero+5.2) {Frame \#n};

\draw[thick, dashed] (\xzero+22,\yzero+2.5) -- (\xzero+26,\yzero+2.5);
\node[fontbf, align=center] at (\xzero+24,\yzero+3.2){multiple\\TWT-SPs};

\end{tikzpicture}
\caption{An illustration of a broadcast TWT setup where one AP and multiple STAs belonging to several broadcast groups are present in the BSS. The horizontal time axis is divided into slots/blocks of length equal to the frame transmission time (FTT) which is assumed to be same across all the STAs. The figure shows a broadcast TWT group consisting of STA $3$, STA $5$ and STA $6$ with the TWT-O ($o$ blocks) TWT-WI ($\tau$ blocks), and TWT-SP ($\zeta$ blocks). A TWT-SP can consist of multiple IEEE 802.11ax frame transmissions of lengths equal to FTT.}
\label{fig:broadcastTWTFigure}
\end{figure*}
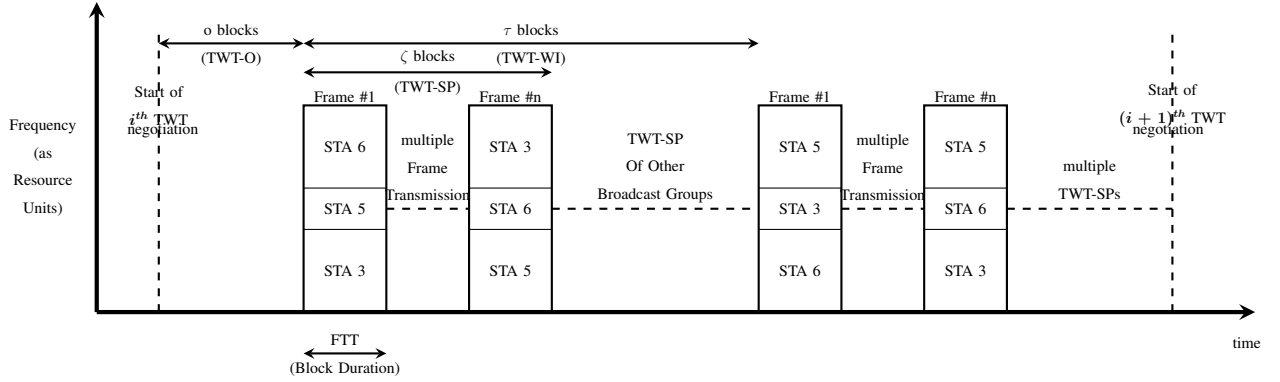

\subsection{System Model}
We consider an IEEE 802.11ax network's infrastructure basic service set (BSS),
where an AP is serving $M$ STAs. We consider that the STAs are indexed by the entries in $\mathcal{M}\triangleq\{1, 2, \ldots, M\}$. Similarly, the RUs available are indexed by the entries in $\mathcal{U}\triangleq\{1, 2, \ldots, U\}$. We consider that time is slotted and the length of each slot/block is equal to the  frame transmission time  (FTT), where a frame consists of preambles followed by the payload data. We assume that the length of the preamble part of a frame is negligible. The blocks (slots)  are indexed by $t\in \{1,2,\ldots,\}$. 
We consider a broadcast TWT set up where an STA can be assigned to one of the $L$ TWT groups, indexed by $l\in \mathcal{L}\triangleq\{1,2,\ldots,L\}$.  
The following parameters related to broadcast TWT are important for the problem formulation (See Fig.~\ref{fig:broadcastTWTFigure}).

\begin{itemize}
    \item {TWT Session or Service Period (SP):} It is the time during which STAs from a TWT group wake up and transmit/receive data. A TWT-SP can consist of multiple IEEE 802.11ax frame transmissions. We denote the TWT SP duration associated with group $l \in \mathcal{L}$ by $\zeta_{l}$ blocks. 
    
    \item {TWT wake interval (TWT-WI) or group periodicity:} It is the time difference between two consecutive TWT-SPs of a broadcast TWT group. Broadcast TWT-SPs are periodic and all broadcast TWT groups have a non-zero period, which we denote as $\tau_l$ blocks,  for all $l \in \mathcal{L}$. 
    \item {Target Wake Time/Offset (TWT-O):} It is the offset from the TWT negotiation instant after which group $l$ has its first TWT-SP. We denote it by $o_l$ blocks. 
\end{itemize}


  
  In the below, we describe  packet arrival and channel models. 

\subsubsection{Data Arrival and Queue Evolution}
Let $a_m(t)$ be the number of packets arriving at the start of block $t$ in the buffer of STA $m\in \mathcal{M}$. We consider a bounded $a_m(t)$, where $0\leq a_m(t)\leq A_{\rm max}$ for some positive integer $A_{\rm max}>0$ for all $m\in \mathcal{M}$ and $t\in \{1,2,\ldots\}$. 
The packets of STA $m$ have the payload of length $\beta_m$ bits and associated deadline of $\delta_m$ blocks, after which they are discarded. Moreover, the  buffer of STA $m$ can store $B_m^{\rm max}$ number of packets.
Let $B_m(t)$ be the number of packets at the start of block $t$ in the buffer of STA $m$. 
The buffer state, $B_m(t)$ evolves as follows: 
\begin{align} 
&B_m(t+1)= \max\{B_m(t) +a_m(t)- R_m(t)-d_m(t), 0\}, \label{eq:buffer evolution}
\end{align}
where $R_m(t)$ and $d_m(t)$ are the number of unexpired packets transmitted to the AP and the number of packets discarded at STA $m$, respectively, in block $t$. 
Physically, the buffer operates as follows: At the start of block $t$, when $a_m(t)$ number of packets arrive, if $B_m(t) +a_m(t)>B_m^{\rm max}$, then the oldest packets in the buffer are discarded (ties are broken randomly). After  transmitting $R_m(t)$ number of packets by the end of block $t$, if some packets expire, all the bits belonging to them are discarded. 
The quantity $d_m(t)$ accounts for the number of packets discarded both due to unavailability of buffer space at the start of $t^{\rm th }$ block when new packets arrive and  time expiry at the end of the block when old packets expire. 


\subsubsection{Channel Model}
Let $h_{m,k}(t)\in \mathcal{H}$ be the channel power gain of STA $m$ on RU $k$ in block $t$ for $m\in\mathcal{M}$ and $k\in\mathcal{U}$, where $\mathcal{H}$ is a finite subset of  the set of positive real numbers, $\mathbb{R}^+$.  The bandwidth of RU $k$ is $b_{k}$ units. 
We assume that the channel power gains on every RU of a STA remains constant over a block, after which it varies randomly in an independent and identical distributed manner, as in \cite{Bellata-Uplink-Resource-Allocation}.


\subsection{Problem Formulation}
We consider that the maximum  rate of reliable communication
achieved by STA $m$ when RU $k$ is allocated to it and when  transmission happens with power $P_{m,k}(t)$ is: $r_{m,k}(t) = \rho(P_{m,k}(t),h_{m,k}(t), b_k)$ bits per second. Let $u_{m,k}(t)$ be the indicator variable that takes value $1$ if the RU $k\in \mathcal{U}$ is allocated to STA $m\in \mathcal{M}$ and zero otherwise, in block $t\in \{1,2,\ldots,\}$.
In this case, the maximum achievable timely throughput in block $t$ is given by 
\begin{align}\label{eq:ImmediateTimelyThroughput}
R_m(t) = &\min\Bigg\{\left \lfloor{\frac{\mathrm{FTT}}{\beta_m}\times\sum_{k=1}^{U}u_{m,k}(t)r_{m,k}(t)}\right\rfloor, B_m(t)+a_m(t),B_m^{\rm max}\Bigg\},
\end{align} 
where $\lfloor x \rfloor$ is the greatest integer less than or equal to $x$. The long-term average timely throughput for STA $m$ is defined as:  
\begin{align}\label{eq:TimelyThroughput}
\bar{R}_m = \lim_{T\rightarrow \infty}\frac{1}{T}\sum_{t=1}^{T}\mathbb{E}[R_m(t)], 
\end{align}
and the long-term average power is defined as the 
\begin{align}\label{eq:TransmitPower}
\bar{P}_m = \lim_{T\rightarrow \infty}\frac{1}{T}\sum_{t=1}^{T}\mathbb{E}\left[P_m(t)\right], 
\end{align}
where $P_m(t) = \sum_{k=1}^{U}u_{m,k}(t)P_{m,k}(t)$. The expectation in \eqref{eq:TimelyThroughput} and \eqref{eq:TransmitPower} are with respect to the randomness in the  timely throughput and transmit power induced by the  randomness in the channel power gain, and the transmission policy which decides the grouping of the STAs and resource allocation to them.  We note that both the long-term average timely throughput and the long-term average power depend on the broadcast TWT parameters, $o_l$, $\tau_l$ and $\zeta_l$. This is because the STA $m$ obtains the transmission opportunity for its uplink transmission only at the blocks dictated by broadcast TWT parameters as mentioned earlier. 


As mentioned, our goal is to maximize the long-term weighted average timely throughput subject to average power constraints at STAs. This can be achieved by solving following optimization problem: 
\begin{subequations}\label{eq:Main Problem}
	\begin{flalign}
	\underset{\substack{u_{m,k}(t)\in \{0,1\},\\ P_{m,k}(t)\in \mathcal{P},\mathcal{G}_l}}{\text{maximize}} &\;\;\sum_{m=1}^{M}\alpha_m\bar{R}_m, &&\\
	\text{subject to}\; &\sum_{k=1}^{U}u_{m,k}(t) \leq 1,  \forall   m\in \mathcal{M},&&\label{eq:OneRUtoAtMostOneUser}\\
&\sum_{m=1}^{M}u_{m,k}(t) \leq 1,  \forall  k\in \mathcal{U},\label{eq:OneUsertoAtMostOneRU}&&\\	
& \sum_{k=1}^{U}u_{m,k}(t)=0,\forall t\notin o_l+n\tau_l,o_l+n\tau_l+1,\ldots,o_l+n\tau_l+\zeta_l,\;\forall m\in \mathcal{G}_l, n=0,1,\ldots,\label{eq:PeriodicityConstraint}&&\\
& \bar{P}_m \leq P_{m,\text{avg}}, \forall m\in \mathcal{M},\label{eq:averagePowerConstraint}&&\\
& P_{m,k}(t) \leq P_{\text{max}},  \forall m\in \mathcal{M},  k\in \mathcal{U}, \label{eq:maximumPowerConstraint}&&\\
&\mathcal{G}_v\cap \mathcal{G}_u =\phi,\; \cup_{l=1}^L\mathcal{G}_l=\mathcal{M}, \forall u, v\in \mathcal{L}, &&\label{eq:GroupSizeConstraint}
	\end{flalign}
\end{subequations} 
for all $t\in \{1,2,\ldots\}$ and $l\in \mathcal{L}$, where $\alpha_m\geq 0$ is a given weight parameter for scaling the timely throughput of STA $m$, $\mathcal{G}_l$ is the set of STAs in group $l\in \mathcal{L}$ and $\mathcal{P}$ is a finite subset of $\mathbb{R}^+$. 
The constraint \eqref{eq:OneRUtoAtMostOneUser} says that an RU can be assigned at most to one user in a slot and \eqref{eq:OneUsertoAtMostOneRU} says that an STA can be assigned at most one RU in a slot. The constraint \eqref{eq:PeriodicityConstraint} accounts for the fact that if  STA $m$ belongs to the group $\mathcal{G}_l$, then it must transmit only in the $\eta_l$ consecutive blocks with indices having periodicity $\tau_l$, with $o_l$ being the first slot.  The constraint \eqref{eq:averagePowerConstraint} says that the  long term average transmit power of STA $m$ must be less than or equal to a given value ${P}_{m,\text{avg}}$ and \eqref{eq:maximumPowerConstraint} says that instantaneous transmit power of STA $m$ in RU $k$ must be less than or equal to $P_{\rm max}$, as per the WLAN 802.11ax standard. Finally, \eqref{eq:GroupSizeConstraint} says that 
the groups must be mutually exclusive and collectively exhaustive.

\section{Solution}\label{sec:Solution}
The optimization problem in \eqref{eq:Main Problem} is an integer program with decision variables, $P_{m,k}(t)\in \mathcal{P}$ and $u_{m,k}(t)\in \{0,1\}^{M\times U}$ for $m\in \mathcal{M}$ and $k\in\mathcal{U}$ which depend on the time index, $t\in \{1,2,\ldots,\}$ and $G_1,\ldots, G_L$ which do not depend on the time index. The number of different ways in which RUs and  transmit powers can be allocated and groups can be formed is exponential. Hence, solving \eqref{eq:Main Problem} directly is prohibitive. 
Hence, in this section, we propose several low complexity algorithms for solving \eqref{eq:Main Problem}, where an algorithm consists of an inner resource allocation (RA) routine, which optimizes the decision variables that depend on time, and an outer grouping routine which optimizes decision variables which do not depend on time. That is, the RA routine  assigns RUs and transmit powers to all the STAs in a group in order to maximize the {long-term average expected sum timely throughput} of STAs in the group. The outer grouping routine  groups the STAs such that the total long-term average timely throughput across all the STAs are maximized.  
Specifically, let $f_{l}^{*}(\mathcal{G})$ be the maximum long-term average expected timely throughput for a group of STAs, $\mathcal{G}\subseteq \mathcal{M}$, assigned with $o_l$, $\tau_l$, and $\zeta_l$ as  TWT parameters, i.e.,  
	\begin{align}\label{eq:RAproblem}
	f_{l}^*(\mathcal{G}) = \underset{\pi\in \Pi}{\text{max}} \;\;\sum_{m\in \mathcal{G}}\alpha_m\bar{R}_m, 
	\;\text{subject to}\;\; \eqref{eq:OneRUtoAtMostOneUser}-\eqref{eq:maximumPowerConstraint}.
	\end{align}
Let $f_{l}^{\pi}(\mathcal{G})$ denote the objective value of \eqref{eq:RAproblem} under a feasible policy $\pi$, which gives a rule to  assign RUs and transmit powers to STAs. 
Given \(f_{l}^{\pi}(\mathcal{G})\), the grouping problem is cast as the following optimization problem:
\begin{subequations}\label{eq:Groupingproblem}
	\begin{align}
	R^{\Sigma} =  \underset{\mathcal{G}_1,\ldots,\mathcal{G}_L}{\text{max}} &\;\;\sum_{l=1}^{L}f_{l}^{\pi}(\mathcal{G}_l) , &&\\
	\text{subject to}&\;\; \mathcal{G}_v\cap \mathcal{G}_u =\phi,\; \forall u, v\in \mathcal{L}, \cup_{l=1}^L\mathcal{G}_l=\mathcal{M}. 
	\end{align}
\end{subequations} 
Let $R^{\Sigma}_{\psi,\pi}$ be the objective value of \eqref{eq:Groupingproblem} when the policy $\pi$ is used to obtain $f_{l}^{\pi}(\mathcal{G})$ and  routine $\psi$ is used for solving \eqref{eq:Groupingproblem}. 

Below, we propose several routines to solve \eqref{eq:RAproblem} and  \eqref{eq:Groupingproblem}.



\subsection{Solution to the Resource Allocation (RA) Problem
in \eqref{eq:RAproblem}}\label{eq:RAsolution}

\subsubsection{Drift-Plus-Penalty (DPP) RA Routine} \label{sec:DPP RA Algorithm}
In the DPP method, we derive an optimization problem from  \eqref{eq:RAproblem} that can be greedily solved at every block, leading to a near-optimal performance. 
In the DPP method, 
we consider that the virtual queue corresponding to the long-term average power constraint in STA $m\in \{1,\ldots,M\}$, $\bar{P}_{m} \leq P_{m,\text{avg}}$  evolves as follows:   
\begin{align}\label{eq:virtualqueue}
	G_{m}(t+1) = \max\{G_{m}(t)-P_{m,\text{avg}}+P_m(t),0\},
\end{align} 
for all $t\in \{1,2,\ldots\}$, where $G_{m}(t)$ is the virtual queue backlog at block $t$, $P_{m,\text{avg}}$ corresponds to the virtual queue service rate, and $P_m(t)$ is the virtual queue arrival process.
Let $\Theta(t)= \{B_m(t),G_m(t), \mathbf{h}_{m}(t)\}_{m=1}^M$  be the state of the system at block $t$, where $\mathbf{h}_{m}(t)=(h_{m,k}(t))_{k=1}^{U}$. We now consider a quantity called as the conditional Lyapunov drift defined as:
\begin{align}\label{eq:drift}
	D(\Theta(t))   = \mathbb{E}[L(\Theta(t+1))  - L(\Theta(t))| \Theta(t)].  
\end{align}
where $L(\Theta(t)) = \frac{1}{2}\sum_{m=1}^{M}B_m^2(t) + \frac{1}{2}\sum_{m=1}^{M}G_m^2(t)$ is called as a quadratic Lyapunov function. 
Consider the following: 
\begin{equation} \label{DPP}
	\begin{aligned}
		\text{minimize} \;\;&  D(\Theta(t)) - V\sum_{m=1}^{M}\mathbb{E}[R_m(t)|\Theta(t)], \\
 		\text{subject to} \;\;&\eqref{eq:OneRUtoAtMostOneUser}-\eqref{eq:maximumPowerConstraint},
    \end{aligned}
\end{equation}
for all $m \in \{1,2,\dots, M\}$, $t \in {\{1,2,\dots\}}$. Here, $V>0$ is the parameter that trades-off the importance of the total conditional expected timely throughput (negative of which is considered as the penalty) and the drift term in \eqref{eq:drift}.
It is challenging to compute the objective function of \eqref{DPP} and solve it.
Hence, we consider the following upper bound to the objective function in \eqref{DPP}: 
\begin{align}\label{eq:DriftPlusPenalty}
	&D(\Theta(t)) - V\sum_{m=1}^{M}\mathbb{E}[R_m(t)|\Theta(t)]\leq   C + \mathbb{E}[U_B|\Theta(t)],
\end{align}
where  \(C = \frac{1}{2}\sum_{m=1}^{M} (P_{m,\text{avg}}^2 + P_{\rm max}^2+  B_{\rm max}^2 + A_{\rm max}^2)\) and $U_B = \sum_{m=1}^{M} G_m(t)(P_m(t)-P_{m,\text{avg}})+\sum_{m=1}^{M}B_m(t)(a_m(t)-R_m(t)-d_m(t)-V\sum_{m=1}^{M}R_m(t)$. We provide a detailed derivation of the  upper bound in \eqref{eq:DriftPlusPenalty} in Appendix A.

	\begin{algorithm}[b]
		\caption{The DPP-RA Routine.}
		\label{algo:DPP-RA}
		{\small
			\begin{algorithmic}[1]
				\Procedure{DPP-RA}{}
                 \For{$t\in \{1,2,\ldots,T\}$}
				\State Obtain $h_{m,k}(t)$, and update $B_m(t)$ and $G_m(t)$ according to  \eqref{eq:buffer evolution} and \eqref{eq:virtualqueue}, respectively. 
    \State For $k\in \mathcal{U}$ and STA $m\in \mathcal{M}$, obtain $P_{m,k}^*(t)$ that maximizes $U_B$. Denote the maximum value by $U_B^*$.  
    \State Construct  $A\in \mathbb{R}^{M \times U}$, whose $(i,j)^{\rm th}$ entry is $U_B^*$ when $j^{\rm th}$ RU is allocated to the $i^{\rm th}$ STA.
   \State Obtain the optimal assignment, $u_{m,k}(t)^*$, by applying the Hungarian (Munkres) algorithm on $A$.
          \EndFor
				\EndProcedure
		\end{algorithmic}}
	\end{algorithm}
The  DPP method observes $\Theta(t)$ at each decision slot $t$ and chooses actions $u_{m,k}(t)$ and $P_{m,k}(t)$, $\forall m, \forall k$, to minimize the right hand side of \eqref{eq:DriftPlusPenalty}. Noting that the conditional expectation in \eqref{eq:DriftPlusPenalty} is minimized when each realization of $U_B$ minimized, we omit the expectation and solve the following:
\begin{equation} \label{DPPGreedyOptimization}
    \begin{aligned}
 		\underset{\substack{u_{m,k}(t)\in \{0,1\},\\ P_{m,k}(t)\in \mathcal{P},}}
   {\text{minimize}} \;\;  U_B,\;\; 
		\text{subject to} \;\;\eqref{eq:OneRUtoAtMostOneUser}, \eqref{eq:OneUsertoAtMostOneRU},\eqref{eq:maximumPowerConstraint},
	\end{aligned}
\end{equation}
for each $t \in {\{1,2,\dots\}}$.  
 Note that the transmit powers are allocated from a finite set of available power levels, $\mathcal{P}$, and the number of available RUs is finite, which is given by the set, $\mathcal{U}$. Hence, \eqref{DPPGreedyOptimization} is an integer program. We solve it optimally by casting it as an assignment problem, as elaborated in Algorithm~\ref{algo:DPP-RA}.  
We show the near-optimal performance of the DPP-RA routine in the following theorem. We recall that $f_l^* (\mathcal{G})$ is the optimal objective value of \eqref{eq:RAproblem}. 


\begin{theorem}\label{thm:DPP-RA}
The Algorithm~\ref{algo:DPP-RA} solves \eqref{DPPGreedyOptimization} optimally. Moreover, the long-term expected average timely throughput of the DPP-RA, $f_{l}^{\rm DPP}(\mathcal{G})$ is bounded as 
\begin{align}\label{eq:bound}
f_l^* (\mathcal{G})- \frac{C}{V}\leq f_{l}^{\rm DPP}(\mathcal{G}) \leq f_l^* (\mathcal{G}),
\end{align}
and the average power constraint in \eqref{eq:averagePowerConstraint} is satisfied, for any $l\in \mathcal{L}$ and $\mathcal{G}\subseteq \mathcal{M}$ and $V>0$.
\end{theorem}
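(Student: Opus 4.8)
The plan is to establish the three assertions in turn: that Algorithm~\ref{algo:DPP-RA} solves the one-slot problem \eqref{DPPGreedyOptimization} to optimality, that the resulting policy is feasible for \eqref{eq:RAproblem} and meets \eqref{eq:averagePowerConstraint}, and that its value is sandwiched as in \eqref{eq:bound}.

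For the first claim I would exploit the additive separability of $U_B$ across STAs. Under the at-most-one-RU constraint \eqref{eq:OneUsertoAtMostOneRU}, if STA $m$ is assigned RU $k$ then $P_m(t)=P_{m,k}(t)$ and $R_m(t)$ reduces to a function of $r_{m,k}(t)$ alone, and the discard count $d_m(t)$ is pinned down by $\Theta(t)$ and the chosen $R_m(t)$; hence the contribution of the pair $(m,k)$ to $U_B$ is a well-defined scalar once the power on that RU is fixed. Because $\mathcal{P}$ is finite, optimizing this contribution over the power level for each fixed $(m,k)$ is a finite search, yielding $P_{m,k}^*(t)$ and a scalar entry $A_{m,k}$. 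What remains is to pick a one-to-one STA--RU assignment satisfying \eqref{eq:OneRUtoAtMostOneUser}--\eqref{eq:OneUsertoAtMostOneRU} that optimizes $\sum_{m,k}A_{m,k}u_{m,k}(t)$, a linear assignment problem solved to global optimality by the Hungarian (Munkres) method, with dummy rows/columns (equivalently a zero-power level) accounting for unassigned STAs or RUs. This gives the first assertion.

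For the two-sided bound I would run the standard Lyapunov opt-comparison argument. The upper bound $f_l^{\rm DPP}(\mathcal{G})\le f_l^*(\mathcal{G})$ is immediate once feasibility is shown, since $f_l^*(\mathcal{G})$ is the optimum of \eqref{eq:RAproblem} over all feasible policies. For the lower bound, because Algorithm~\ref{algo:DPP-RA} minimizes $U_B$ pointwise, the bound $C+\mathbb{E}[U_B\mid\Theta(t)]$ in \eqref{eq:DriftPlusPenalty} evaluated at the DPP decisions is no larger than its value under any other feasible decision, in particular under an optimal stationary channel-state-only (randomized) policy $\pi^*$ that attains $\sum_{m\in\mathcal{G}}\alpha_m\bar R_m=f_l^*(\mathcal{G})$ (to within $\epsilon$) while meeting $\mathbb{E}[P_m^{\pi^*}]\le P_{m,\text{avg}}$; the existence of such an $\omega$-only policy is the usual fact that the optimum over all policies is approached within this class. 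Evaluating $\mathbb{E}[U_B^{\pi^*}\mid\Theta(t)]$, the virtual-queue terms obey $\sum_m G_m(t)(\mathbb{E}[P_m^{\pi^*}]-P_{m,\text{avg}})\le 0$ by power feasibility of $\pi^*$, while the penalty term equals $-Vf_l^*(\mathcal{G})$. I would then take iterated expectations of \eqref{eq:DriftPlusPenalty} under the DPP policy, telescope $\mathbb{E}[L(\Theta(t+1))-L(\Theta(t))]$ over $t=1,\dots,T$, divide by $VT$, and let $T\to\infty$, using $L(\Theta(1))<\infty$ and $\mathbb{E}[L(\Theta(T))]\ge 0$ so that the drift telescopes away.

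The step I expect to be the crux is controlling the remaining backlog cross-term $\sum_m B_m(t)\,\mathbb{E}[a_m(t)-R_m^{\pi^*}(t)-d_m^{\pi^*}(t)\mid\Theta(t)]$ in $\mathbb{E}[U_B^{\pi^*}\mid\Theta(t)]$. Here I would invoke the finite-buffer assumption: $B_m(t)\le B_m^{\max}$ is uniformly bounded, and under the stationary policy the expected one-slot backlog increment is non-positive by flow conservation (arrivals balanced by deliveries plus the discards that the finite buffer forces), so this term is non-positive (or at worst a bounded constant that is absorbed). Boundedness of $B_m$ then also makes the real-queue part of $\mathbb{E}[L(\Theta(T))]/(VT)$ vanish, while mean-rate stability of the virtual queue $G_m$, which follows from the bounded per-slot drift together with a Slater-type interior point, makes its part vanish as well; this yields $f_l^{\rm DPP}(\mathcal{G})\ge f_l^*(\mathcal{G})-C/V$. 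Finally, for \eqref{eq:averagePowerConstraint} I would use the same mean-rate stability $\mathbb{E}[G_m(T)]/T\to 0$: from \eqref{eq:virtualqueue} one has $G_m(t+1)\ge G_m(t)-P_{m,\text{avg}}+P_m(t)$, and summing over the horizon, dividing by $T$, and passing to the limit gives $\bar P_m-P_{m,\text{avg}}\le \lim_{T}\mathbb{E}[G_m(T)]/T=0$. This establishes feasibility, which in turn closes the upper-bound argument and completes the proof.
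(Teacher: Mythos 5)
Your proposal is correct and follows essentially the same route as the paper's proof: optimality of the per-slot problem via separability of $U_B$ plus the Hungarian assignment, the lower bound via comparison with an optimal arrival-and-channel-state-only policy using the standard Neely inequalities (your ``flow conservation'' step is exactly the paper's $\mathbb{E}[a_m(t)]\leq\mathbb{E}[R_m^{\omega^*}(t)+d_m^{\omega^*}(t)]+\delta$), telescoping and dividing by $VT$, and mean-rate stability of the virtual queues for the power constraint. The only cosmetic difference is that the paper obtains mean-rate stability of $G_m(t)$ directly from the linear-in-$t$ bound on $\mathbb{E}[L(\Theta(t))]$ (giving an $O(\sqrt{t})$ bound on $\mathbb{E}[G_m(t)]$) rather than invoking a Slater-type condition.
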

\begin{proof}
See Appendix B.  
\end{proof}
We  make  following remarks on Algorithm~\ref{algo:DPP-RA} and Theorem~\ref{thm:DPP-RA}. 
\begin{itemize}
	\item The complexity of directly solving \eqref{DPPGreedyOptimization} is at least   $(\min(U,M))!\times |\mathcal{P}|$, where $!$ represents the factorial operation. This is because, when $U> M$, the set of RUs can be allocated to the first STA in $U$ ways, to the second STA in $U-1$ ways and so on, and similarly, when  $M> U$, the set of STAs can be allocated to the first RU in $M$ ways, to the second RU in $M-1$ ways and so on. Moreover, for each RU-STA assignment, there are $|\mathcal{P}|$ possible power allocations. However, the complexity of Algorithm~\ref{algo:DPP-RA} is polynomial in $M$, $U$ and $|\mathcal{P}|$, as we first need to compute $U_B^*$ for each STA-RU assignment by searching over $|\mathcal{P}|$ values and then solve the assignment problem using the Hungarian algorithm which has a polynomial complexity. 
	\item 	By increasing $V$, the timely throughput of the DPP-RA routine can get arbitrarily close to that of the optimal policy. However, the time required for the average powers to converge to constants less than or equal the bounds for each STA will be longer with increasing $V$. 
\end{itemize}

\subsubsection{Round-Robin (RR) RA  Routine}\label{sec:Round Robin RA Algorithm}
In the RR-RA routine, we allocate transmit power independent of the RU allocation and channel power gain. The allocated power, $P_{m,k}(t)$ (for any $k$) is chosen to be the maximum power less than or equal to $P_{\rm max}$ that satisfies the average power constraint, $P_{m,\text{avg}}$ at each block of a TWT-SP.
Having allocated the power, the RUs are allocated to the STAs in a round-robin fashion, without accounting for the  underlying channel power gain or the data arrival random process.

\subsubsection{Greedy RA Routine} \label{eq:Greedy RA Algorithm}

In the Greedy RA routine, the power allocation is identical to what is adopted in the RR routine, as mentioned in Section~\ref{sec:Round Robin RA Algorithm}. Given the power allocation, in the frame transmission at the $t^{\rm th}$ block inside a TWT-SP of the $l^{\rm th}$ broadcast group, an instantaneous timely-throughput matrix is constructed, where $\left(m,k\right)^{\rm th}$ entry of the matrix is the instantaneous timely throughput of the $m^{\rm th}$ STA if it is allocated the  $k^{\rm th}$ RU and transmit power $P_{m,k}(t)$, $m \in \mathcal{G}_l$ and $k \in \mathcal{U}$, which can be computed from \eqref{eq:ImmediateTimelyThroughput}. Then the Hungarian algorithm is used to obtain the optimal RU allocation, $u_{m,k}(t)$ for all  $m\in \mathcal{M}$ and for all  $k\in \mathcal{U}$.

	\begin{algorithm}[b]
	\caption{The Greedy Grouping Routine.}
	\label{algo:Greedy-Grouping}
	{\small
		\begin{algorithmic}[1]
			\Procedure{Greedy-Grouping}{}
			\State Assign $\mathcal{S}_l=\phi$ for all $l\in \mathcal{L}$. 
			\While {$\cup_{l\in \mathcal{L}} \mathcal{S}_l \neq \mathcal{M}$}
			\State  Compute $\Delta_l(\mathcal{S}_l,m)$ for each $m \notin \cup_{l\in \mathcal{L}}\mathcal{S}_l$ and $l\in \mathcal{L}$. 
			\State Assign STA $m$ to group $l$ if $\Delta_l(\mathcal{S}_l,m) > \Delta_k(\mathcal{S}_k,n)$ for all $k\in \mathcal{L}\setminus \{l\}$ and $m,n\notin \cup_{l\in \mathcal{L}}\mathcal{S}_l$. 
			\State Update $\mathcal{S}_l$ for all $l\in \mathcal{L}$.
			\EndWhile
			\EndProcedure
	\end{algorithmic}}
\end{algorithm}
\subsection{Solution to the Grouping Problem in \eqref{eq:Groupingproblem}}\label{eq:GroupingSolution}
Below, we solve the grouping problem suboptimally.

 \begin{table}[b]
 \caption{System parameters}
\centering
\begin{tabular}{ p{6 cm}|p{3 cm} }
 \hline
 \textbf{Description and Notation} & \textbf{Value} \\ [0.5ex] 
 \hline
 Number of STAs ($M$) & $8$  \\ 
Number of RUs ($U$) & $4$  \\
Center Frequency ($f_c$) & $2.4$ GHz  \\
Block Duration (FTT) &1 ms  \\ 

Maximum Transmit Power ($P_{\rm max})$ & $1$ Watt\\
Control Parameter of DPP algorithm ($V$) &$10000$ \\ 
Bernoulli Packet Generator parameters: 
& \\
(i) Number of packets per batch ($b$) &$10$ \\
(ii) Probability of batch generation ($p$) &$0.7$ \\ [1ex]
 \hline
\end{tabular}
\label{tab:table-1}
\end{table}

 \begin{table}[b]
 \caption{TWT parameters}
\centering
\begin{tabular}{ p{6 cm}|p{2 cm}|p{2 cm}|p{2 cm} }
 \hline
 \textbf{Description and Notation} & \textbf{Group 1} &\textbf{Group 2} &\textbf{Group 3} \\ [0.5ex] 
 \hline
 TWT Offset (TWT-O) & $2$ ms & $16$ ms & $10$ ms  \\ 
 TWT Wake Interval (TWT-WI) & $30$ ms &  $150$ ms
 & $90$ ms\\
 TWT Service Periods (TWT-SP)  & $7$ ms & $2$ ms & $5$ ms  \\
 \hline
\end{tabular}
\label{tab:BroadcastTWTParameters}
\end{table}

\begin{figure}[t]
		\begin{center}
			\includegraphics[scale = 0.75]{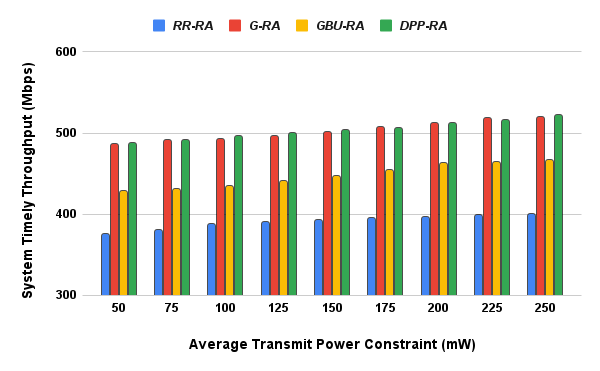}
		\end{center}
		\caption{System Timely Throughput versus Average Transmit Power constraint for fixed number of STAs. GBU-RA implies Greedy Buffer Unaware algorithm is used for resource allocation. Similar naming convention is used for other bars as well.}
		\label{fig:TT1}
\end{figure}
\subsubsection{Greedy Grouping Algorithm} \label{eq:Greedy Grouping Algorithm}
We use a greedy algorithm, proposed in \cite{DBLP:Submodular-maximization-journals/corr/abs-1810-12861}, for solving \eqref{eq:Groupingproblem}.  
Recall that the long-term average expected timely throughput obtained by solving 
\eqref{eq:RAproblem}  via a method described in the previous subsection for a group of STAs, $\mathcal{G}$, with TWT parameters, $o_l$, $\tau_l$ and $\zeta_l$, is denoted by  $f_l(\mathcal{G})$. Note that $f_l(\cdot): 2^\mathcal{M} \rightarrow \mathbb{R}$, where we define $f_l(\mathcal{\phi}) = 0$. We cast the grouping problem as a partitioning problem, where we need to find a partition of the set of STAs, $\mathcal{M}$, among $L$ groups such that the sum of the valuations of the groups (which is the weighted long-term expected average timely throughput of the STAs in the group, in our case)  is maximized.
The iterative greedy routine works as follows: Let $\mathcal{S}_l$ be the set of STAs allocated to group $l\in \mathcal{L}$ at a given iteration.  Let the increment in reward when STA $m\notin \cup_{l\in \mathcal{L}} \mathcal{S}_l$ is allocated to group $l$ given the existing confirmed set $\mathcal{S}_1,\ldots,\mathcal{S}_L$ be denoted by $\Delta_l(\mathcal{S}_l,m)$. That is,
\begin{align}\label{eq:Increment1}
\Delta_l(\mathcal{S}_l,m)\triangleq f_l(\mathcal{S}_l \cup \{m\}) - f_l(\mathcal{S}_l), \;\; \forall l\in \mathcal{L}.
\end{align}
An unallocated STA $m$ is then assigned to group $l$ if $\Delta_l(\mathcal{S}_l,m) > \Delta_k(\mathcal{S}_k,n)$ for all $k\in \mathcal{L}\setminus \{l\}$ and $m,n\notin \cup_{l\in \mathcal{L}}$.  This procedure is continued until all the STAs have been assigned to a group. 
We summarize this procedure in Algorithm~\ref{algo:Greedy-Grouping}. Since there are only a finite number of STAs, the algorithm will terminate in a finite number of steps.

\subsubsection{Round Robin (RR) Grouping Algorithm}\label{eq:Round Robin Grouping Algorithm}
In this grouping algorithm, the partition of the set of STAs, $\mathcal{M}$, among $L$ groups is performed as follows: We initialize all the groups as empty set i.e., $\mathcal{G}_l=0$, for all $l\in \mathcal{L}$. Then, starting with $l=1$, we assign STAs to group $l$ according to ascending order of their indices until $|\mathcal{G}_l|<\left\lfloor{({M+L-1})/{L}}\right\rfloor$. We then increment $l$ and repeat the previous step for each unallocated STA until all the STAs are exhausted.

\section{Numerical Results} 

 \begin{figure}[t]
			\includegraphics[scale = 0.75]{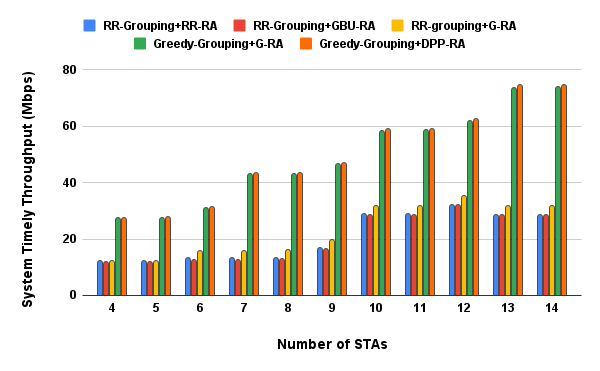}
		\caption{System Timely Throughput versus Number Of STAs for fixed Average Transmit Power Constraint. Greedy-Grouping+DPP-RA implies that the Greedy algorithm is used for grouping and the DPP algorithm is used for resource allocation. Similar naming convention is used for other bars as well.}
		\label{fig:SystemTimelyThroughputVsNumberOFSTAs}
\end{figure}


\balance 
 In this section, we present simulation results for an uplink communication scenario in an IEEE 802.11ax WLAN setup, in which $M$ STAs and an AP present inside a BSS support TWT, using the algorithms presented in the previous section. The channel power gain of an STA is assumed to be uniformly distributed over $\{10, 0.1, 0.001\}$ in a slot, which transitions independently across slots. 
An STA generates traffic from either a Bernoulli packet generator which assumes that data arrive at each slot as a \emph{batch} which contains a fixed number of packets, $b$ with probability of batch generation $p$ or from a realistic packet generator consisting of a Buffered Video streaming (BV) and two constant bit rate (CBR) traffic models, CBR-$1$ and CBR-$2$ with parameters mentioned in 
 \cite{WLAN-TGax-Evaluation-methodology}. In the BV traffic model, STAs generate frames at a rate of $30$ frames per second with frame size distributed as Weibull Distribution. The generated frames are further fragmented into packets of fixed size of $1500$ bytes. We control the offered data rate to $12$ Mbps by adjusting the parameters of the Weibull distribution. On the other hand, in the CBR traffic models, STAs generate data of fixed size of $3$ (CBR-$1$) and $40$ (CBR-$2$) Kilobytes at every fixed interval of time of $150$ (CBR-$1$) and $90$ (CBR-$2$) milliseconds producing a traffic of data rate $160$ Kbps (CBR-$1$) and $3.5$ Mbps (CBR-$2$). The associated deadlines for the frames generated by the BV, CBR-$1$ and CBR-$2$ traffic data models are  $30$, $150$ and $90$ milliseconds, respectively.

\subsection{Timely throughput variations of RA routines with, $P_{\rm avg}$} 
In Fig.~\ref{fig:TT1}, we present variations of system timely throughput when the RU and transmit power allocation occur according to the following RA routines: (i) Round-Robin (RR), (ii) Greedy (G), (iii) Greedy Buffer-Unaware (GBU), and (iv) Drift-Plus-Penalty (DPP), where the GBU routine is similar to the Greedy routine in Section~\ref{eq:Greedy RA Algorithm}, but it operates  without the knowledge of the instantaneous buffer status. Since the buffer status is unknown, one cannot compute the instantaneous timely throughput. Hence, we use the instantaneous throughput matrix for optimization using the Hungarian algorithm. 
For obtaining the results, we keep the number of STAs, $M$, and number of RUs, $U$ fixed and the average transmit power constraint, $P_{\rm avg}$ is varied. We assume STAs generate data according to Bernoulli packet generator. The system parameters used for simulation are mentioned in the Table \ref{tab:table-1}.

From the figure, we observe that the performance of DPP and Greedy RA routines is significantly better than the round-robin based benchmark algorithm (RR-RA). RA routines which are both channel-aware and buffer-aware, like DPP-RA and G-RA, perform better than only channel-aware and buffer-unaware routines, such as RR-RA, GBU-RA. 
 \subsection{Timely throughput variations with the number of STAs, $M$}
We now fix the average transmit power constraint, $P_{\rm avg}$ and vary the number of STAs, $M$. When number of STAs is increased, the new STA is considered to be generating data from one of the three realistic traffic models, BV, CBR-$1$ and CBR-$2$. The AP sets the TWT parameters as in the Table  \ref{tab:BroadcastTWTParameters} and broadcasts it to STAs. 
We show the obtained system timely throughput  in  Fig.~\ref{fig:SystemTimelyThroughputVsNumberOFSTAs}. From the figure, we observe that as the number of STAs increases, the system timely throughput increases in all the cases. Within the Round Robin based grouping (RR-Grouping) algorithm, RR-RA performs better than GBU-RA. This can be explained as follows: At a decision instant, when the number of STAs inside a broadcast group is less and the buffer information is unavailable, RA using only channel power gain information generates lower system timely-throughput than the benchmark RR-RA routine. 
The performance of Greedy grouping algorithm along with the DPP and Greedy based RA routines continues to be better than the benchmark Round Robin based grouping along with benchmark RA routines. 

 \section{Conclusions}
In this work, we considered a broadcast TWT set up based uplink  transmission and proposed a grouping mechanism that greedily partitions the STAs into a pre-defined number of broadcast groups and two resource allocation algorithms for allocating RUs and transmit powers to the STAs in each TWT group at the start  of a block inside a TWT-SP. We observed that the proposed grouping plus resource allocation algorithms, i.e., Greedy grouping with Greedy RA routine and Greedy grouping with Drift-Plus-Penalty based RA routine perform better than the baseline algorithms in terms of the considered system timely throughput metric. As part of the future work, we plan to analyse the complexity of the proposed grouping plus resource allocation algorithm. We also plan to investigate the actual performance benefit of the proposed algorithm in real WLAN system deployment.

\section*{Appendix}
\subsection{The Derivation of the Upper-Bound in \eqref{eq:DriftPlusPenalty}}\label{app:Upper-Bound}
We begin by deriving an upper bound on the conditional Lyapunov drift, $D(\Theta(t))$ defined in \eqref{eq:drift}. 
From \eqref{eq:virtualqueue}, we can write,
\begin{align}
	G_{m}^2(t+1) &= (\max\{G_{m}(t)-P_{m,\text{avg}}+P_{m}(t),0\})^2 \nonumber\\
	&\leq  G_{m}^2(t) + P_{m}^2(t) + P_{m,\text{avg}}^2 + 2 G_{m} (t) (P_{m}(t)- {P}_{m,\text{avg}}),\label{eq:powerInequality}
\end{align}
where the inequality in \eqref{eq:powerInequality} holds  because $	\left(\max\{a,0\}\right)^2 \leq a^2$. 
Similarly, we can expand the virtual queue defined in \eqref{eq:buffer evolution} as follows:
\begin{flalign}
&	B_{m}^2(t+1) = (\max\{\min(B_m(t)+a_m(t),B_{\text{max}})-R_m(t)-d_m(t),0\})^2 \nonumber\\
	&\leq (\min(B_m(t)+a_m(t),B_{\text{max}})-R_m(t)-d_m(t))^2 \label{firstInequality}\\
	&\leq (\min(B_m(t)+a_m(t),B_{\text{max}})^2 + (R_{m}(t)+d_{m}(t))^2- 2(R_{m}(t)+d_{m}(t))B_{m}(t) \label{secondInequality} \\
	&\leq (B_m(t)+a_m(t)^2 + (R_{m}(t)+d_{m}(t))^2- 2(R_{m}(t)+d_{m}(t))B_{m}(t) \label{thirdInequality} \\
	&= B_{m}^2(t) + a_{m}^2(t) +(R_{m}(t)+d_{m}(t))^2+ 2B_{m}(t)(a_{m}(t)-R_{m}(t)-d_{m}(t)), 
\end{flalign}
where the inequality in \eqref{firstInequality} is because   $\left(\max\{a,0\}\right)^2 \leq a^2$, the inequality in \eqref{secondInequality} follows because $\left(\min\{B_{m}(t)+a_{m}(t),B_{\text{max}}\}\right) \geq B_{m}(t)$ and the inequality in \eqref{thirdInequality} holds true because $\left(\min\{a,b\}\right)^2 \leq a^2$.

Now consider the following:
\begin{align}
	&L(\Theta(t+1))  - L(\Theta(t)) = \frac{1}{2}\sum_{m=1}^{M} (B^2_m(t+1) + G^2_m(t+1) - B_m^2(t) - G_m^2(t) )\\
	&\leq \frac{1}{2}\sum_{m=1}^{M} P_{m,\text{avg}} ^2+ P_{m}^2(t) + 2 G_m (t) (P_m(t)- P_{m,\text{avg}}) + a_m^2(t)\nonumber\\
	&\;\;\;\;+  (R_m(t)+d_m(t))^2 
	+ 2 B_m (t) (a_m(t)- R_m(t)-d_m(t))\\
	& \leq C +  \sum_{m=1}^{M}   G_m (t) (P_m(t)- P_{m,\text{avg}})+     B_m (t) (a_m(t)- R_m(t)-d_m(t)),
\end{align}
for any $C \geq \frac{1}{2}\sum_{m=1}^{M} P_{m,\text{avg}}^2 + P_m^2(t)+  (R_m(t)+d_m(t))^2 + a_m^2(t)$.  
In this work, we choose, $C = \frac{1}{2}\sum_{m=1}^{M} (P_{m,\text{avg}}^2 + P_{\rm max}^2+  B_{\rm max}^2 + A_{\rm max}^2)$.  

\subsection{Performance of the Proposed Drift-Plus-Penalty Algorithm}
\label{apndx: appendix b}
The optimality of the proposed algorithm in solving \eqref{DPPGreedyOptimization} follows because the Hungarian algorithm is optimal for assigning the RUs to STAs. For a given RU-STA assignment, any power value in $\mathcal{P}$ can be selected since doing so does not violate any of the constraints. Hence, the power that maximizes the objective function can be chosen for each decision instant. In the next, we prove the second part of the theorem. 

Let $\pi$ be the optimal policy obtained by Algorithm~\ref{algo:DPP-RA}. The objective function of \eqref{DPP} under policy $\pi$ can be bounded as follows: 
\begin{align}\label{eq:DPPinequality1}
&     D^{\pi}(\Theta(t)) - V\sum_{m=1}^{M}\mathbb{E}[R_{m}^{\pi}(t)|\Theta(t)]\nonumber\\
&\leq  C + \sum_{m=1}^{M} G_m(t)\mathbb{E}\left[P_{m}^{\pi}(t)-P_{m,\text{avg}}|\Theta(t)\right]  +\sum_{m=1}^{M} B_m(t)\mathbb{E}\left[a_{m}(t)-R_{m}^{\pi}(t)-d_{m}^{\pi}(t)|\Theta(t)\right]\nonumber\\
	&\;\;\;\;-V\sum_{m=1}^{M}\mathbb{E}[R_{m}^{\pi}(t)|\Theta(t)]\nonumber\\
 &\leq C + \sum_{m=1}^{M} G_m(t)\mathbb{E}\left[P_{m}^{\omega^*}(t)-P_{m,\text{avg}}|\Theta(t)\right]  +\sum_{m=1}^{M} B_m(t)\mathbb{E}\left[a_{m}(t)-R_{m}^{\omega^*}(t)-d_{m}^{\omega^*}(t)|\Theta(t)\right] \nonumber\\
	&\;\;\;\;-V\sum_{m=1}^{M}\mathbb{E}[R_{m}^{\omega^*}(t)|\Theta(t)],
\end{align}
where $\omega^*$ is the optimal packet-arrival and channel state-only policy for the original problem in \eqref{eq:RAproblem}. We note that our problem satisfies the boundedness conditions of Theorem 4.5 in \cite{series/synthesis/2010Neely} and hence there exists the optimal packet-arrival and channel state-only policy. In the above, the first inequality follows from the upper bound on the drift-plus-penalty expression and the second inequality follows because (i) the policy $\pi$ is the optimal policy that minimizes the upper bound on the DPP expression and (ii) $\omega^*$ may not be optimal in minimizing the upper bound on the DPP expression. 
Let the maximum value of the objective in \eqref{eq:RAproblem} be  $f^*$ and the  instantaneous value of quantity $x$ under policy $\omega^*$ be denoted by  $x^{\omega^*}(t)$ for $x\in \{f, P_m, R_m, d_m\}$. Concretely,  $f^{\omega*}(t) = \sum_{m=1}^{M}\mathbb{E} \left[R_{m}^{\omega^{*}}(t)\right]$.
In \cite{series/synthesis/2010Neely}, it is shown that for the policy $\omega^{*}$ for any $\delta>0$, the following inequalities hold true:
\begin{align*}
    & f^{\omega*}(t) \geq f^* - \delta, \\
    & \mathbb{E}\{P_{m}^{\omega^{*}}(t) - P_{m,\text{avg}}\} \leq \delta, \qquad \forall m \\
    & \mathbb{E}\{a_{m}(t)\} \leq \mathbb{E}\{R_{m}^{\omega^{*}}(t)+d_{m}^{\omega^{*}}(t)\} + \delta, \qquad \forall m
\end{align*}
for any $\delta>0$. By substituting the above inequalities into \eqref{eq:DPPinequality1}, we obtain
\begin{align*}
    D^{\pi}(\Theta(t)) - &V\sum_{m=1}^{M}\mathbb{E}[R_{m}^{\pi}(t)|\Theta(t)] \leq C - V(f^*-\delta)+  \sum_{m=1}^M(G_m(t)+B_m(t))\delta. 
\end{align*}
As $\delta \rightarrow 0$, due to the limit inequality theorem, we have, 
\begin{align*}
    D^{\pi}(\Theta(t)) - &V\sum_{m=1}^{M}\mathbb{E}[R_{m}^{\pi}(t)|\Theta(t)] \leq C - Vf^*,   
\end{align*}
and by taking iterated expectations and telescopic sum, we obtain, 
\begin{align}\label{eq:DPPinequality2}
    \mathbb{E}\left[L(\Theta(t))\right] - \mathbb{E}\left[L(\Theta(0))\right] - V\sum_{\tau=0}^{t-1}\sum_{m=1}^{M}\mathbb{E}\left[R_{m,\pi}(\tau)\right] \leq tC - Vtf^*
\end{align}

Now, we rearrange the terms in \eqref{eq:DPPinequality2}  and obtain: 
\begin{align*}
    V\sum_{\tau=0}^{t-1}\sum_{m=1}^{M}\mathbb{E}\left[R_{m,\pi}(\tau)\right] &\geq Vtf^*-\mathbb{E}\left[L(\Theta(0))\right] +  \mathbb{E}\left[L(\Theta(t))\right]-tC\nonumber\\
    &\geq Vtf^*-\mathbb{E}\left[L(\Theta(0))\right]-tC,  
\end{align*}
where the last inequality follows because $\mathbb{E}\left[L(\Theta(t))\right]$ is a positive quantity. 
We divide both sides of the above inequality by $tV$ and apply the limit $t\rightarrow \infty$ at both the sides of above inequality, which yields the following:
\begin{align*}
    \underset{t\rightarrow\infty}{\text{lim}}\frac{1}{t}\sum_{\tau=0}^{t-1}\sum_{m=1}^{M}\mathbb{E}\left[R_{m}^{\pi}(\tau)\right] \geq f^* - \frac{C}{V}. 
\end{align*}
 
In the below, we show that the optimal DPP policy, $\pi$ satisfies the average power constraints.  Note that the instantaneous timely throughput will be non-negative and bounded as the number of packet arrivals and resources are finite. Let $f^{\rm max}$ be the bound on the total instantaneous expected timely throughput, i.e.,  
\begin{align*}
    0 \leq \sum_{m=1}^{M}\mathbb{E}\left[R_{m}^{\pi}(\tau)\right] \leq f^{\max}, \qquad \tau \in \{0,1,\dots\}.
\end{align*}
From \eqref{eq:DPPinequality2}, we get
\begin{align*}
    \mathbb{E}\left[L(\Theta(t))\right] \leq \mathbb{E}\left[L(\Theta(0))\right] + Ct + Vt(f^{\max}-f^*). 
\end{align*}
Recall that 
\begin{align*}
    \mathbb{E}\left[L(\Theta(t))\right] = \frac{1}{2}\sum_{m=1}^{M}\mathbb{E}\left[B_{m}^{2}(t)\right] + \frac{1}{2}\sum_{m=1}^{M}\mathbb{E}\left[G_m^{2}(t)\right]
\end{align*}
Therefore, the following inequalities hold true for all $m$,
\begin{align*}
    &\mathbb{E}\left[B_m^{2}(t)\right] \leq 2\mathbb{E}\left[L(\Theta(0))\right]+2Ct + 2Vt(f^{\max}-f^*)   \\
    &\mathbb{E}\left[G_m^{2}(t)\right] \leq 2\mathbb{E}\left[L(\Theta(0))\right]+2Ct + 2Vt(f^{\max}-f^*)  
\end{align*}
As the variance of a random quantity is always greater than equal to zero, we have, $\mathbb{E}\left[B_m^{2}(t)\right]\geq \mathbb{E}^{2}\left[B_{m}(t)\right]$ and we can obtain
\begin{align*}
    \mathbb{E}\left[B_m(t)\right] \leq \sqrt{2\mathbb{E}\left[L(\Theta(0))\right]+2Ct+2Vt(f^{\max}-f^*)}. 
\end{align*}
We divide both sides of the inequality by $t$ and apply $t\rightarrow \infty$ with the assumption that $\mathbb{E}\left[L(\Theta(0))\right]\leq \infty$ and obtain 
\begin{align*}
    \underset{t\rightarrow \infty}{\text{lim}} \frac{\mathbb{E}\left[B_m(t)\right]}{t} = 0. 
\end{align*}
Along the above lines, we can also obtain $ \underset{t\rightarrow \infty}{\text{lim}} \frac{\mathbb{E}\left[G_m(t)\right]}{t} = 0$.
Hence, the queues defined in our problem are mean rate stable. The mean rate stability of virtual queues, $G_1(t), \ldots, G_M(t)$ implies that the average power constraints at all the STAs are satisfied.

\bibliographystyle{IEEEtran}
\bibliography{main}

\end{document}